\font\msbm=msbm10 at 10pt
\newcommand{\ZZ}{\mbox{\msbm Z}}
\newcommand{\NN}{\mbox{\msbm N}}
\newcommand{\FF}{\mbox{\msbm F}}
\def \Z {{\ZZ}}
\def \N {{\NN}}
\def \F {{\FF}}
\newtheorem{theorem}{Theorem}
\newtheorem{lemma}[theorem]{Lemma}
\newtheorem{remark}[theorem]{Remark}
\newtheorem{example}[theorem]{Example}
\newtheorem{definition}[theorem]{Definition}
\newtheorem{algorithm}{Algorithm}
\title{On Dress Codes with Flowers}%
\begin{document}

\maketitle

\begin{abstract}\textit{Fractional Repetition} (FR) codes are well known class of \textit{Distributed Replication-based Simple Storage (Dress)} codes for the Distributed Storage Systems (DSSs). In such systems, the replicas of data packets encoded by Maximum Distance Separable (MDS) code, are stored on distributed nodes. Most of the available constructions for the FR codes are based on combinatorial designs and Graph theory. In this work, we propose an elegant sequence based approach for the construction of the FR code. In particular, we propose a beautiful class of codes known as Flower codes and study its basic properties. 
\end{abstract}

\begin{keywords}
Distributed storage systems, Fractional Repetition Codes, Flower Codes, Sequences, Dress Codes, Codes for distributed storages.
\end{keywords}
\section{Introduction}
In Distributed Storage Systems (DSSs), data file is encoded into certain packets and those packets are distributed among $n$ nodes. A data collector has to collect packets from any $k$ (called reconstruction degree) nodes among the $n$ nodes to reconstruct the whole file. In the case of node failure, system is allowed to reconstruct a new node to replace the failed node. The repair is of two types viz. exact and functional. In the repair process, the new node is constructed by downloading $\beta$ packets from each node (helper node) of a set of $d$ (repair degree) nodes. Thus total bandwidth for a repairing a node is $d\beta$. For an $(n, k, d)$DSS, such regenerating codes are specified by the parameters  $\{[n, k, d], [\alpha,\beta, B] \}$, where B is the size of the file and $\alpha$ is the number of packets on each node. 
One has to optimize both $\alpha$ and $\beta$, hence we get two kind of regenerating codes viz. Minimum Storage Regenerating (MSR) codes useful for archival purpose and Minimum Bandwidth Regenerating (MBR) codes useful for Internet applications \cite{rr10,survey}. Some of the MBR codes studied by the researchers fails to optimize other parameters of the system such as disk I/O, computation and scalability etc. Towards this goal, a class of MBR codes called Dress codes were introduced and studied by researchers \cite{rr10,6062413,RSKR10,7066224} to optimize disk I/O. These codes have a repair  mechanism known as encoded repair or table based repair. Dress codes consisting of an inner code called fractional repetition (FR) code  and outer MDS code. Construction of FR codes has been an important research problem and many constructions of FR codes are known based on graphs \cite{DBLP:journals/corr/abs-1102-3493,rr10,Wangwang12,DBLP:journals/corr/SilbersteinE14,DBLP:journals/corr/SilbersteinE15,7118709}, combinatorial designs \cite{6912604,DBLP:journals/corr/OlmezR14,DBLP:journals/corr/abs-1210-2110,DBLP:journals/corr/abs-1210-2110,6763122,7004501,DBLP:journals/corr/SilbersteinE14,DBLP:journals/corr/abs-1210-2110} and other combinatorial configuration \cite{6810361,6033980,6570830,DBLP:journals/corr/abs-1208-2787}. Existence of FR code is discussed in \cite{DBLP:journals/corr/abs-1201-3547}. FR codes have been studied in different directions such as Weak FR codes \cite{DBLP:journals/corr/abs-1302-3681,DBLP:journals/corr/abs-1303-6801}, Irregular FR codes \cite{6804948}, Variable FR codes \cite{6811237}. A new family of Fractional Repetition Batch Code is studied in \cite{DBLP:journals/corr/Silberstein14,DBLP:journals/corr/SilbersteinE15}. For a given FR code, algorithms to calculate repair degree $d$ and upper-bound of reconstruction degree $k$, is given in \cite{DBLP:journals/corr/abs-1305-4580}. 
In this paper, a more general definition of FR code (Definition \ref{defWFR}) is considered for a realistic practical scenario. Further, a sequence based construction for FR code, is given. In particular,  construction of Flower code is studied in detail. 

The structure of the paper is as follows. In Section $2$, we define a general FR codes and collects relevant background material. In Section $3$, construction of flower code with single and multiple ring 
is given. We also study its relations with sequences. In particular, we show how to construct them using arbitrary binary sequences. Final section concludes the paper with general remarks.

\begin{figure}
\centering
\includegraphics[scale=0.3]{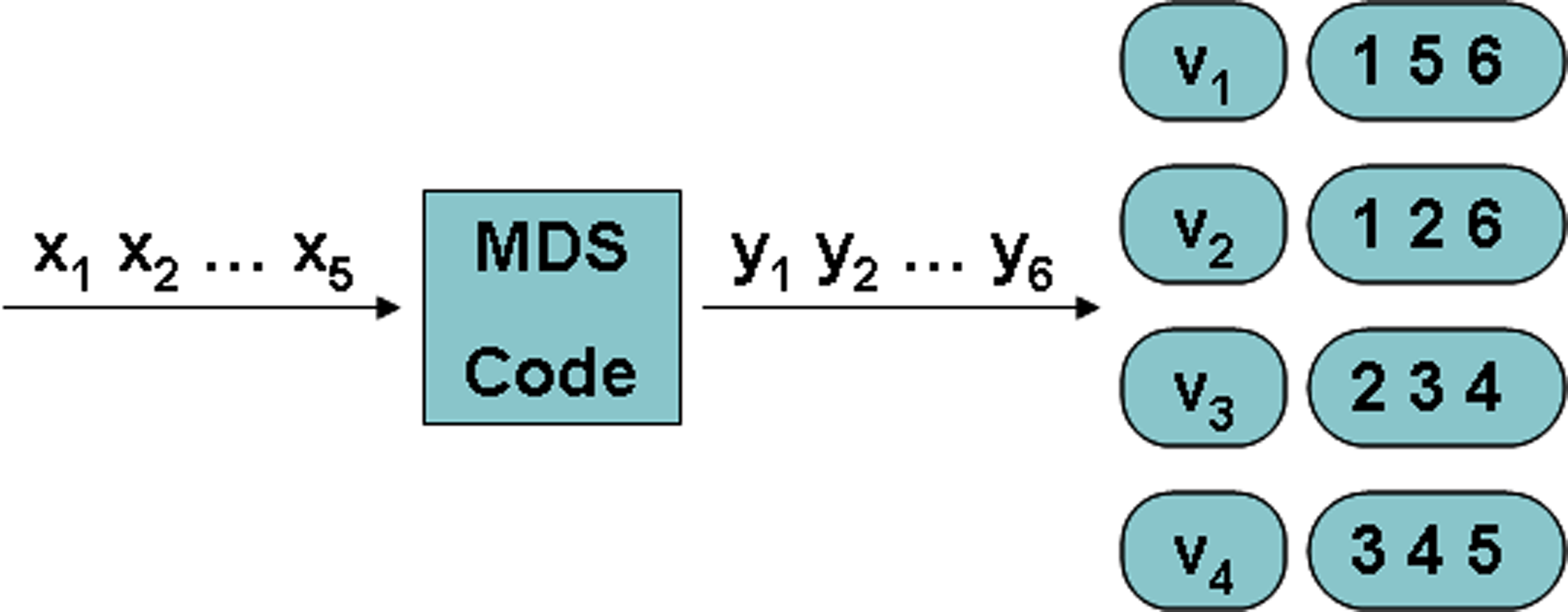}
\caption{DRESS Code consisting of an inner fractional repetition code $\mathscr{C}$ having $n=4$ nodes, number of packets $\theta=6$, replication factor $\rho=2$ and repair degree $d=3$ and an outer MDS code.}
\label{dressstrong}
\end{figure}

\section{Background}
Distributed Replication-based Simple Storage (Dress) Codes consists of an inner Fractional Repetition (FR) code and an outer MDS code (see Figure \ref{dressstrong}). In this, $5$ packets are encoded into $6$ packets using a MDS code and then each packet is replicated twice and distributed among $4$ nodes. In case of one node failure data can be recovered easily in such a system. 
We formally define FR code as follows.



\begin{definition} (Fractional Repetition Code): 
 Given a DSS with $n$ nodes $U_i  (i \in\Omega_n = \{1,2,\ldots,n\})$ and $\theta$ packets $P_j ( j \in \Omega_{\theta} = \{1,\ 2,\ldots,\theta \})$, one can define FR code  $\mathscr{C}(n, \theta, \alpha, \rho)$ as a collection $\mathscr{C}$ of $n$ subsets $U_i$ ($i\in\Omega_n$) of a set $\{P_j:j\in\Omega_{\theta}\}$, which satisfies the following conditions.
\begin{itemize}
	\item For each $j\in\Omega_{\theta}$, packet $P_j$ appears exactly $\rho_j$ ($\rho_j\in\N$) times in the collection $\mathscr{C}$.
	\item For every $i = 1, 2, \ldots, n$, $|U_i|=\alpha_i$ ($\alpha_i \in\N$),
\end{itemize}
where $\alpha_i$ denotes the number of packets on the node $U_i$,   $\rho = \max\{\rho_j\}_{j=1}^{\theta}$ is the maximum replication among all packets. The maximum number of packets on any node is given by $\alpha=\max\{\alpha_i\}_{i=1}^n$.  Clearly $\sum_{j=1}^{\theta}\rho_j = \sum_{i=1}^n\alpha_i$. 
\begin{remark}
When a node fails it can be repaired by a set of different nodes. Number of nodes contacted for repairing the fail node is known as repair degree. 
Hence each node $U_i$ has a set of different repair degrees. Let $(d_i)^j$ denotes the typical repair degree of the node $U_i$ then the set of repair degrees is $\{(d_i)^j : j \in \N \}$. If  
$d_i=\max_{j} \{(d_i)^j\}$ is the maximum repair degree of the node $U_i$ then $d=\max\{d_i\}_{i=1}^n$ denotes the maximum repair degree of any node.
\end{remark}

\label{defWFR}
\end{definition}

An example of FR code $\mathscr{C}: (7, 5, 3, 4)$ is shown in Table \ref{nodep4}. In this example, packet $1$ is replicated $4$ times and all other packets are replicated $3$ times. The set of 
repair degrees for each node are shown in the last column. Note that node $U_7$ has two different repair degrees.

\begin{table}[ht]
\caption{Node-Packet Distribution for FR code $\mathscr{C}:\ (7, 5, 3, 4)$.}
\centering 
\begin{tabular}{|c||c|c|}
\hline
\textbf{Nodes} &\textbf{Packets distribution} & \textbf{Set of repair degrees} \\[0.5ex] 
\textbf{$U_i$} &     &\textbf{ for each node}\\ 
\hline\hline
$U_1$& $P_1,\ P_5$       &\{2\}\\ 
\hline
$U_2$& $P_1,\ P_2$       &\{2\}\\ 
\hline 
$U_3$& $P_2,\ P_3$       &\{2\}\\ 
\hline 
$U_4$& $P_1,\ P_3,\ P_4$ &\{3\}\\ 
\hline 
$U_5$& $P_2,\ P_4,\ P_5$ &\{3\}\\ 
\hline
$U_6$& $P_3,\ P_5$      &\{2\}\\ 
\hline
$U_7$& $P_1,\ P_4$           &\{1,2\}\\ 
\hline
\end{tabular}
\label{nodep4}
\end{table}

\begin{remark}
Each FR code $\mathscr{C}: (n, \theta, \alpha, \rho)$ considered in this paper from now on has same replication factor $\rho$ for each packet $i.e.$ $\rho_j=\rho$ for every $j=1,2,\ldots, \theta$.
\end{remark}

For every FR code one can generate a node-packet distribution incidence matrix denoted by $M_{n\times\theta}$. The distribution matrix is unique representation of FR code. The formal definition of node-packet distribution incidence matrix is as follows.

\begin{definition} (Node-Packet Distribution Incidence Matrix): For an FR code $ \mathscr{C}: (n, \theta, \alpha , \rho)$, a node-packet distribution incidence matrix is a  matrix $M_{n \times \theta}$ = $[a_{ij}]_{n \times \theta}\ s.t.$ 
\[
a_{ij} = \left\{ 
\begin{array}{ll}
a_{ij} & :\mbox{ if packet $P_j$ appears $a_{ij}$ times on node $U_i$;} \\
0 & :\mbox{ if packet $P_j$ is not on node $U_i$.}\\
\end{array}
\right.
\]
\label{incidence matrix}
\end{definition}
Clearly, if the packet $P_j$ appears exactly ones on node $U_i$ then the matrix is binary. 
For example, the node packet distribution incidence matrix $M_{7\times 5}$ for the FR code $\mathscr{C}:\ (7, 5, 3, 4)$ as given in Table \ref{nodep4}, will be
\[
M_{7 \times 5} =
\begin{bmatrix}
1 & 0 & 0 & 0 & 1  \\
1 & 1 & 0 & 0 & 0  \\
0 & 1 & 0 & 1 & 0  \\
1 & 0 & 1 & 1 & 0  \\
0 & 1 & 0 & 1 & 1  \\
0 & 0 & 1 & 0 & 1  \\
1 & 0 & 0 & 1 & 0  \\
\end{bmatrix}. 
\]
In order to construct an FR code $\mathscr{C}: (n, \theta, \alpha , \rho)$, one has to drop $\theta$ packets on $n$ nodes such that an arbitrary packet $P_j$ is replicated $\rho$ times in the code. The size of node $U_i$ is $\alpha_i$. 
 This motivates us to define $cycle\ m$ and $jump$.

\begin{definition} (Cycle): For every $1 \leq m \leq \rho$, a cycle is defined as one complete dropping of packets on $n$ nodes such that all $\theta$ packets are exhausted from $\Omega_{\theta} = \{1,\ldots,\theta \}$ without any replication. Note that $\rho$ is the replication factor of each packets in FR code $\mathscr{C}: (n, \theta, \alpha , \rho)$.
\label{cycle}
\end{definition}

An example of such cycle is shown in Table \ref{jump and cycle}.

\begin{definition} (Jump): A jump is defined as the number of null packets between two consecutive packets from $\Omega_{\theta} = \{1,\ldots,\theta \}$ while dropping them on $n$ nodes $U_1,U_2, \ldots ,U_n$.
\end{definition}

An example of such jump is shown in Table \ref{jump and cycle}. In the table, dash between two packets (such as packets indexed by $3$ and $4$) represents jump for the certain cycle.
\begin{table}[ht]
\caption{Jump and cycles for the $4$ packets on $3$ nodes.} 
\begin{center}
 {\renewcommand\arraystretch{1.5}
\begin{tabular}{|c||c|c|c|c|c|c|c|c|c|c|c|c|}\hline
   \textbf & \multicolumn{5}{c|}{\textbf{Cycle 1}} & \multicolumn{7}{c|}{\textbf{Cycle 2}} \\ \cline{2-13}
\hline\hline
Packet Index & 1 & 2 & 3 & - & 4 & - & - & 1 & 2 & - & 3 & 4 \\
\hline
Node Index   & 1 & 2 & 3 & 1 & 2 & 3 & 1 & 2 & 3 & 1 & 2 & 3 \\
\hline
\end{tabular}}
\end{center}
\label{jump and cycle}
\end{table}

One can associate  a binary characteristic sequence (dropping sequence), node sequence (ordering the node sequence where the packet is dropped) and node-packet incidence matrix with any   fractional repetition code. We now formally collect these definitions. 


\begin{definition} (Dropping Sequence): An FR code $\mathscr{C}: (n, \theta, \alpha , \rho)$ can be characterized by a binary characteristic sequence (weight of the sequence is $\rho\theta$) which is one whenever a packet is dropped on a node and zero whenever no packet is dropped. 
\label{dropping sequence}
\end{definition}

\begin{example}
For a given $n=4$ and $\theta=6$ a possible dropping sequence for the FR code $\mathscr{C}: (4, 6, 3, 2)$ as shown in Figure \ref{dressstrong}, is $\left\langle 1,1,1,1,1,1,0,0,0,1,1,1,0,0,1,1,1 \right\rangle$. Using the sequence one can generate the FR code by dropping packet $P_1$ on node $U_1$, since $d(1)=1$ and so on.
\end{example}

Dropping sequence $\left\langle d(m)\right\rangle_{m=1}^l$ of an FR code $\mathscr{C}: (n, \theta, \alpha , \rho)$ has the following properties.
\begin{enumerate}
	\item 
	\begin{equation*}
 d(m) = \left\{ 
\begin{array}{ll}
1  & :\mbox{ if packet is dropped on node $U_t$;} \\
0  & :\mbox{ if packet is not dropped on node $U_t$,}
\end{array}
\right. \\
\end{equation*}
where
\[
 t = \left\{ 
\begin{array}{ll}
m \ (mod \ n)  & :\mbox{ if } n\nmid m; \\
n & :\mbox{ if } n\mid m.
\end{array}
\right.
\]
	\item WLOG one can set $d(l)=1$ for dropping sequence $\left\langle d(m)\right\rangle_{m=1}^l$, where $l\in \N\ s.t.\ weight$ of $\left\langle d(m)\right\rangle_{m=1}^l$ = $w(\left\langle d(m)\right\rangle_{m=1}^l)=\rho\theta \mbox{ and } d(m)\in \{0, 1\}.$
	\item It is clear that $l\in\{\rho\theta, \rho\theta +1, ..., \rho\theta + (n-1)^{\rho\theta -1},...\}$ but one can reduce the value of $l$ such that $\rho\theta\leq l\leq\rho\theta + (n-1)^{\rho\theta -1}$ by puncturing $n$ consecutive 0's in $\left\langle d(m)\right\rangle_{m=1}^l$. It can be observed easily that if $l>\rho\theta + (n-1)^{\rho\theta -1}$ then $\exists\ q(\geq n)$ consecutive 0's in $\left\langle d(m)\right\rangle_{m=1}^l$.
	\item If $d(r)=1(1\leq r\leq l)$ then packet P$_\lambda$ is distributed on node $U_t$, where
	\begin{equation*}
 \lambda = \left\{ \,
\begin{array}{ll}
 \eta (r) & :\mbox{ if $\eta (r)\neq 0$;} \\
\theta & :\mbox{ if $\eta (r) =0$,}
\end{array}
\right.
\end{equation*}
where $\eta (r)=w\left(\left\langle d(m)\right\rangle_{m=1}^r\right) \ (mod \ \theta)$.
\end{enumerate}

\begin{definition} (Node Sequence): An FR code $\mathscr{C}: (n, \theta, \alpha , \rho)$ can be characterized by a finite sequence $\left\langle s_i \right\rangle _{i=1}^{\rho\theta}$ defined from $\left\{1, 2,..., \rho\theta \right\}\left(\subset\N\right)$ to $\Omega_n = \left\{1, 2,..., n \right\}\left(\subset\N\right)$ such as packet $P_t\left(t \in \Omega_{\theta} \right)$ is dropped on node $s_i \left( \in \Omega_n \right)$, where 
\begin{equation}
t = \left\{ \,
\begin{array}{ll}
i \ (mod \ \theta)  & :\mbox{ if } \left(i\ (mod \ \theta) \right) \neq 0; \\
\theta & :\mbox{ if } \left(i\ (mod \ \theta) \right) = 0.
\end{array}
\right.
\label{eq:example_left_right1}
\end{equation}
\label{node sequence}
\end{definition}


\begin{example}
For given $n$ nodes and $\theta$ packets, a possible node sequence is $\left\langle 1,2,3,4,1,2,2,3,4,3,4,1\right\rangle$ for FR code $\mathscr{C}: (4, 6, 3, 2)$ as shown in Figure \ref{dressstrong}.
\end{example}

\begin{remark}
For an FR code $\mathscr{C}:(n, \theta, \alpha, \rho)$, there may exist more then one dropping sequences $\left\langle d(m)\right\rangle_{m=1}^l$ as well as node sequences $\left\langle s_i \right\rangle _{i=1}^{\rho\theta}$.
\end{remark}


Next section describes the construction of the flower codes 
by arranging $n$ nodes in a single ring or multiple ring configuration and then dropping $\theta$ packets following a configuration. 
\section{Constructions of Flower Codes}
A generalized ring construction of FR codes was described in \cite{DBLP:journals/corr/abs-1302-3681} which give rise to Weak FR codes with $\rho=2$ depending upon weather $\theta$ is a multiple of $n$ or not. In this section, we generalize $\rho=2$ construction to $\rho > 2$. In order to construct the FR code of replication factor $\rho$, we first place $n$ nodes on a circle. Now we can place $\theta$ packets on each of them one by one till a cycle is complete. One has to place the packets till we complete all $\rho$ cycles. This will give rise to an FR code since all $\theta$ packets are replicated $\rho$ times in the system. Now we can vary the packet dropping mechanism by introducing jumps within a cycle or after every cycle. To do so first we define different kind of jumps. This process yields several classes of interesting FR codes.

\begin{definition} (Internal  $\&$ External Jumps): An internal jump is a jump applied within a cycle $m, 1 \leq m \leq \rho$. Similarly a jump is called external jump if it is applied between two consecutive cycles $m\; \& \;m+1, 1 \leq m \leq \rho$. 
In particular, internal (external) jump function is denoted by $f_{in} (f_{ex})$ with domain $\left\{1,2,...,\rho\theta\right\}\backslash\{\theta, 2\theta,\ldots,\rho\theta\}\left(\left\{1, 2,...,\rho\right\}\right)$. Both jump functions have common co-domain $\N\cup \{0\}$.
\label{Jumps} 
\end{definition}
\begin{example} For given nodes $n=3$, packets $\theta=4$, internal jump function $f_{in}=1$ and external jump function $f_{ex}$, one can construct node sequence $\left\langle 1,3,2,1,2,1,3,2\right\rangle$ for some FR code $\mathscr{F}_{\mathscr{C}}: (3, 4, 3, 2)$.
\end{example}

\begin{remark}
A special kind of jump, (for example, see Definition \ref{subsetjump}), can be described by a characteristic function $\xi(i)=1 \;\mbox{(drop)}\;\mbox{or} \;0\;\mbox{(do not drop)}$ which tells us when to drop a packet at a position $i, 1 \leq i \leq n$.
\end{remark}

\begin{definition} (Subset Type Jumps): \label{subsetjump}
Let $\Omega_n =\{1, 2, \ldots, n \}$ be an index set of $n$ nodes and let $A \subseteq \Omega_n$. A jump is called subset type jump if it's characteristic function $\xi(i)$ is given by 
\[
\xi(i) =\left\{
\begin{array}{ll}
1 & :\mbox{if}\; i \in A; \\
0 & :\mbox{if}\; i \notin A.
\end{array}\right.
\]
\end{definition}
Now we are ready to define a Flower code with single ring having a subset type jump within it's $\rho$ cycles. 
\begin{definition}(Flower code with single ring): A Flower code $\mathscr{F}_{\mathscr{C}}: (n, \theta, \alpha , \rho)$ with single ring  and having a subset type jumps can be defined by first placing $n$ nodes along a single ring and then dropping packets as per a subset jump $A_m (1 \leq m \leq \rho)$ (see Definition \ref{subsetjump}) within every cycle $m (1 \leq m \leq \rho)$  till we drop all $\rho \theta$  packets on the ring. An example for such jump, is illustrate in Table \ref{flower jump table}.
\label{Flower with 1 ring} 
\end{definition}

\begin{table}[ht]
\caption{A possible distribution for a Flower code $\mathscr{F}_{\mathscr{C}}: (8, 7, 4, 3)$ with subset type jumps $A_1=\{ 1, 2, 4\}$, $A_2= \{5, 6, 7, 8\}$ and $A_3= \{2, 3, 5, 6, 7\}.$} 
\begin{center}
 {\renewcommand\arraystretch{1.5}
\begin{tabular}{|c||c|c|c|c|c|c|c|c|}\hline
   \textbf{Node} & \multicolumn{7}{c|}{\textbf{Packet distribution}} & \textbf{Node Capacity} \\ \cline{2-8}
          $U_i$       & \multicolumn{3}{c|}{\textbf{For $A_1$}}&\multicolumn{2}{c|}{\textbf{For $A_2$}}&\multicolumn{2}{c|}{\textbf{For $A_3$}}&      $\alpha_i$               \\ 
\hline\hline
$U_1$& $P_1$ & $P_4$ & $P_7$ & - & - & - & - &3\\
\hline
$U_2$& $P_2$ & $P_5$ & - & - & - & $P_1$ & $P_6$& 3\\
\hline
$U_3$& - & - & - & - & - & $P_2$ & $P_7$ &2\\
\hline
$U_4$& $P_3$ & $P_6$ & - & - & - & - & - &2\\
\hline
$U_5$& - & - & - & $P_1$ & $P_5$ & $P_3$ & - &3\\
\hline
$U_6$& - & - & - & $P_2$ & $P_6$ & $P_4$ & - &3\\
\hline
$U_7$& - & - & - & $P_3$ & $P_7$ & $P_5$ & - &3\\
\hline
$U_8$& - & - & - & $P_4$ & - & - & - &1\\
\hline
\end{tabular}}
\end{center}
\label{flower jump table}
\end{table}

\begin{lemma}
Consider a Flower code $\mathscr{F}_{\mathscr{C}}: (n, \theta, \alpha , \rho)$ with single ring and having subset type jumps on node subsets $A_1, A_2,\ldots A_{\rho}$. If total number of packets distributed on node $U_{i_t} (i_t\in A_m; 1\leq m\leq\rho\mbox{ and }t=1,2,\ldots,|A_m|)$ for subset type jump with single ring on node subset $A_m$ is $P(U_{i_t},A_m)$ then 
\begin{equation*}
P(U_{i_t},A_m)=\left\{ 
\begin{array}{ll}
\left\lceil\frac{\theta}{|A_m|}\right\rceil & :\mbox{ if } t\leq\theta\ (mod\ |A_m|); \\
&\\
\left\lfloor \frac{\theta}{|A_m|}\right\rfloor & :\mbox{ otherwise.}
\end{array}
\right. 
\end{equation*}
\label{11}
\end{lemma}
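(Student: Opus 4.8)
The plan is to reduce the claim to the elementary fact that distributing $\theta$ labelled packets cyclically among $|A_m|$ positions fills the first few positions one more time than the rest. I will fix a cycle $m$ and write $A_m = \{i_1, i_2, \ldots, i_{|A_m|}\}$, listing the nodes in the order in which the ring visits them while the subset type jump with characteristic function $\xi$ (Definition \ref{subsetjump}) is active. Thus during cycle $m$ the packets $P_1, P_2, \ldots, P_{\theta}$ are dropped one at a time, in order, onto $U_{i_1}, U_{i_2}, \ldots, U_{i_{|A_m|}}, U_{i_1}, \ldots$ cyclically, skipping every node not in $A_m$.

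First I would make the landing rule explicit: the $s$-th packet of the cycle, $P_s$ with $1 \le s \le \theta$, is placed on $U_{i_t}$ precisely when $t \equiv s \pmod{|A_m|}$, where as usual $t = |A_m|$ is used when the residue vanishes. Consequently $P(U_{i_t}, A_m)$ equals the number of integers $s \in \{1, 2, \ldots, \theta\}$ with $s \equiv t \pmod{|A_m|}$, which turns the statement into a pure counting problem.

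Next I would carry out the count. Writing $\theta = q\,|A_m| + r$ with $q = \lfloor \theta/|A_m| \rfloor$ and $r = \theta \bmod |A_m|$, $0 \le r < |A_m|$, the values of $s$ in $\{1, \ldots, \theta\}$ congruent to $t$ modulo $|A_m|$ form the arithmetic progression $t, t + |A_m|, t + 2|A_m|, \ldots$. A direct check shows this progression has $q+1$ terms not exceeding $\theta$ when $1 \le t \le r$, and exactly $q$ terms when $r < t \le |A_m|$. Since $q + 1 = \lceil \theta/|A_m| \rceil$ and $q = \lfloor \theta/|A_m| \rfloor$, this is exactly the claimed formula, with the dichotomy governed by the threshold $t \le \theta \bmod |A_m|$.

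The only point requiring care is the divisible case $r = 0$, in which $|A_m|$ divides $\theta$: then the condition $t \le 0$ is vacuous, so every node receives $q = \lfloor \theta/|A_m|\rfloor = \lceil \theta/|A_m|\rceil$ packets and the two branches of the formula coincide. I expect no genuine obstacle beyond bookkeeping the cyclic index convention (residue $0 \mapsto |A_m|$) and this boundary case; one can cross-check the argument against Table \ref{flower jump table}, where $A_1=\{1,2,4\}$ with $\theta=7$ gives node capacities $3,2,2$, matching $\lceil 7/3\rceil, \lfloor 7/3\rfloor, \lfloor 7/3\rfloor$.
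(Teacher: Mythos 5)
Your proof is correct and follows essentially the same route as the paper's: both reduce the claim to distributing $\theta$ packets cyclically over the $|A_m|$ active nodes and then doing division with remainder, $\theta = q|A_m| + r$. Your version is in fact slightly more complete, since by phrasing the count as the number of $s \in \{1,\dots,\theta\}$ with $s \equiv t \pmod{|A_m|}$ you actually justify \emph{which} nodes receive the extra packet (those with $t \le r$), a point the paper's proof asserts only implicitly.
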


\begin{proof} Suppose $A_1,A_2,\ldots,A_{\rho}$ are the subsets of $\Omega_n$. For the given $n$ nodes, $\theta$ packets and $A_m$ $(m=1,2,\ldots \rho)$ one can construct flower code $\mathscr{F}_{\mathscr{C}}: (n, \theta, \alpha , \rho)$ with single ring and subset type jump on $A_m$, $\forall$ $m$. Since subset type jump on particular subset $A_m$ is assigning $\theta$ distinct packets on nodes $U_{i_t}\in A_m$ ($1\leq t\leq |A_m|$) in some specific order. If $|A_m|$ divides $\theta$ then number of packets dropped on a particular node $U_{i_t}$ is $\frac{\theta}{|A_m|}$. If $|A_m|$ does not divide $\theta$ then after dropping $\left\lfloor\frac{\theta}{|A_m|}\right\rfloor$ packets on each node of $A_m$, there will remain $\theta\ (mod\ |A_m|)$ packets to assign nodes. Hence, there are $\theta\ (mod\ |A_m|)$ number of nodes $U_{i_t}\in A_m$ with $\left\lfloor\frac{\theta}{|A_m|}\right\rfloor+1$ packets each. Remaining nodes in the set $A_m$ have $\left\lfloor\frac{\theta}{|A_m|}\right\rfloor$ packets each. Hence, the lemma is proved.
\end{proof}

\begin{lemma}
For a Flower code $\mathscr{F}_{\mathscr{C}}: (n, \theta, \alpha , \rho)$ with single ring and having subset type jumps on node subsets $A_1, A_2,\ldots A_{\rho}$, the total number of packets stored on a particular node $U_i$ is 
\begin{equation*}
\alpha_i=\sum_{m=1}^{\rho}P(U_i,A_m),
\end{equation*}
where, $P(U_i,A_m)=0$ for $U_i\notin A_m$.
\end{lemma}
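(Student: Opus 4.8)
The plan is to exploit the fact that the single-ring flower construction drops packets one cycle at a time, so that the packet count on any node decomposes additively over the $\rho$ cycles. First I would recall from Definition \ref{Flower with 1 ring} that such a code is built by running $\rho$ cycles, where cycle $m$ distributes $\theta$ fresh packets along the ring according to the subset type jump on $A_m$. Because each packet drop belongs to exactly one cycle, the multiset of packets landing on a fixed node $U_i$ is the disjoint union, over $m = 1, \ldots, \rho$, of the packets that land on $U_i$ during cycle $m$; hence the total count $\alpha_i$ is the sum of the per-cycle counts, with no double counting and no omission.

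Next I would identify the contribution of cycle $m$ to node $U_i$ with $P(U_i, A_m)$. If $U_i \in A_m$, then the drops of cycle $m$ on the nodes of $A_m$ are exactly the mechanism analysed in Lemma \ref{11}, so cycle $m$ deposits $P(U_i, A_m)$ packets on $U_i$. If $U_i \notin A_m$, the characteristic function $\xi$ of the subset jump (Definition \ref{subsetjump}) vanishes at $i$, so no packet is dropped on $U_i$ during cycle $m$, in agreement with the stated convention $P(U_i, A_m) = 0$. Summing these per-cycle contributions then gives $\alpha_i = \sum_{m=1}^{\rho} P(U_i, A_m)$, as claimed.

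The step deserving the most care is the additivity invoked in the first paragraph: one must verify that the $\rho$ cycles genuinely partition every drop on $U_i$, so that no drop is counted twice and none is missed. This is immediate from the cycle-by-cycle nature of the construction in Definition \ref{Flower with 1 ring}, but since the entire statement rests on it, I would make the partition explicit rather than leave it implicit. Everything else reduces to a direct appeal to the definitions and to Lemma \ref{11}.
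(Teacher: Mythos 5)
Your argument is correct and follows essentially the same route as the paper, whose own proof is a one-sentence assertion that the packets on $U_i$ are the sum of the per-cycle contributions over the subset jumps $A_1,\ldots,A_\rho$. You merely make explicit the partition of drops by cycle and the appeal to Lemma \ref{11} (together with the convention $P(U_i,A_m)=0$ when $U_i\notin A_m$), which the paper leaves implicit.
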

\begin{proof} For a Flower code $\mathscr{F}_{\mathscr{C}}: (n, \theta, \alpha , \rho)$ with single ring and having subset type jumps, all packets stored on a particular node are equal to the sum of total number of packets dropped on the node for each subset type jump of $A_m$ ($1\leq m\leq\rho$). Hence proved.
\end{proof}

\begin{remark}
For a Flower code $\mathscr{F}_{\mathscr{C}}: (n, \theta, \alpha , \rho)$ with a subset type jump $A_m (1 \leq m \leq \rho)$ has the following properties
\begin{itemize}
 	\item If $\left|A_{\max}\right|=\max \left\{\left|A_1\right|,\left|A_2\right|,...,\left|A_{\rho}\right|\right\}$ then
\begin{equation*}
\left\lfloor\frac{\theta}{\left|A_{\max}\right|}\right\rfloor\leq\alpha_i\leq\alpha\leq\sum\limits_{m=1}^{\rho}\left\lceil \frac{\theta}{|A_m|}\right\rceil.
\end{equation*}
	\item If $\exists\ U_i\ s.t.\ U_i\in\bigcap_{m=1}^{\rho}A_m$ then the node $U_i$ has maximum number of distributed distinct packets $i.e.\ \left|U_i\right|=\alpha$ but converse is not true.
  \item If $\exists\ U_p\left(p\in A_m\right)\ s.t.$
		\begin{equation*}
	  U_p\notin\bigcup_{\stackrel{i=1}{i\neq m}}^{\rho}A_i,\ then \  \alpha _{\max}\geq\left\lceil \frac{\theta}{\left|A_m\right|}\right\rceil
	\end{equation*}
\end{itemize}
and viz., where $1\leq i, j, p\leq n$.
\end{remark}


To construct FR code $\mathscr{C}: (n, \theta, \alpha, \rho)$, one can concatenate $\rho$ distinct cycles with some internal and external jumps, where each cycle defined on $n$ nodes. 

\begin{definition}(Flower code with multiple rings): A system with $\rho$ cycles of $\theta$ packets in which packets are distributed among $n$ nodes arranged on a circle with internal jump function $f_{in}:\left\{1,2,...,\rho\theta\right\}\backslash\{\theta, 2\theta,\ldots,\rho\theta\}\rightarrow\N\cup \{0\}$ and external jump function $f_{ex}: \left\{1, 2,...,\rho\right\}\rightarrow\N\cup\{0\}$ is called Flower code $\mathscr{F}_{\mathscr{C}}$ with parameters $n, \theta, \alpha$ and $\rho$, where $\alpha$ is the maximum collective frequency of appearance of a node in all cycles.
\label{Flower with multi ring}
\end{definition}

\begin{example} A Flower code $\mathscr{F}_{\mathscr{C}}: (n, \theta, \alpha , \rho)$ with internal jump function $f_{in}(x)=1$ and external jump function $f_{ex}(x)=0$, is illustrated in Table \ref{Flower table 3}.
\begin{table}[ht]
\caption{A Flower code $\mathscr{F}_{\mathscr{C}}: (5, 6, 3, 2)$.} 
\begin{center}
 {\renewcommand\arraystretch{1.5}
\begin{tabular}{|c||c|c|c|c|c|c|}\hline
   \textbf{Node} & \multicolumn{5}{c|}{\textbf{Packet distribution}} & \textbf{Node Capacity} \\ 
\hline\hline
$U_1$& $P_1$ & -     & $P_6$ & $P_3$ & -     &3\\
\hline
$U_2$& -     & $P_4$ & $P_1$ & -     & $P_6$ &3\\
\hline
$U_3$& $P_2$ & -     & -     & $P_4$ & -     &2\\
\hline
$U_4$& -     & $P_5$ & $P_2$ & -     & -     &2\\
\hline
$U_5$& $P_3$ & -     & -     & $P_5$ & -     &2\\
\hline
\end{tabular}}
\end{center}
\label{Flower table 3}
\end{table}
\end{example}


Note that the Flower code $\mathscr{F}_{\mathscr{C}}: (n, \theta, \alpha, \rho)$ has internal and external jump functions ($f_{in}$ and $f_{ex}$ respectively) and each Flower code $\mathscr{F}_{\mathscr{C}}: (n, \theta, \alpha, \rho)$ is an FR code $\mathscr{C}: (n, \theta, \alpha, \rho)$ so terms of node sequence $s_m$ in $\left\langle s_m\right\rangle_{m=1}^{\rho\theta}$ can be represented in terms of $f_{in}$ and $f_{ex}$ as described in Theorem \ref{theorem} .

\begin{theorem} Consider a Flower code $\mathscr{F}_{\mathscr{C}}: (n, \theta, \alpha, \rho)$ having an internal jump function $f_{in}:\left\{1,2,...,\rho\theta\right\}\backslash$ $\{\theta, 2\theta,\ldots,\rho\theta\}$ $\rightarrow\N\cup \{0\}$ and an external jump function $f_{ex}:\left\{1, 2,...,\rho\right\}\rightarrow\N\cup \{0\}$. 
If node sequence of the Flower code $\mathscr{F}_{\mathscr{C}}: (n, \theta, \alpha, \rho)$ is $\left\langle s_m\right\rangle _{m=1}^{\rho\theta}$ then
\begin{equation}
s_m = \left\{ 
\begin{array}{ll}
1 & :\mbox{ if }m=1; \\
\vartheta (m)   & :\mbox{ if } \vartheta (m) \neq 0, 1<m\leq \rho\theta; \\
n & :\mbox{ if } \vartheta (m)= 0, 1<m\leq \rho\theta;\\
0 & :\mbox{ if } m>\rho\theta,
\end{array}
\right. 
\label{Node sequence and jump}
\end{equation}
where 
\begin{equation*}
\vartheta (m)=\left[m+\sum\limits_{
     \begin{array}[b]{c}
        i=0 \\ 
        \theta\nmid i
     \end{array}
}^{m-1}f_{in}(i)+\sum\limits_{
     \begin{array}[b]{c}
        i=0 \\ 
        \theta\mid i
     \end{array}
}^{m-1}f_{ex}\left(\frac{i}{\theta}\right)\right](mod\ n).
\end{equation*}
\label{theorem}
\end{theorem}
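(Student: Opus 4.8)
The plan is to prove the formula by induction on $m$, after reinterpreting the bracketed quantity defining $\vartheta(m)$ as the \emph{raw} node position reached after $m$ packet drops: the total number of unit advances plus all jumps accumulated along the ring, before the ring is ``wrapped'' by the reduction modulo $n$. Writing
\begin{equation*}
D(m)=m+\sum_{\substack{i=1\\ \theta\nmid i}}^{m-1}f_{in}(i)+\sum_{\substack{i=\theta\\ \theta\mid i}}^{m-1}f_{ex}\!\left(\tfrac{i}{\theta}\right),
\end{equation*}
one has $\vartheta(m)\equiv D(m)\pmod n$, and the goal becomes to show that $s_m$ equals $D(m)$ reduced modulo $n$, with the standard convention that a residue of $0$ is reported as node $n$ (the nodes being labelled $1,\dots,n$). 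The cases $m=1$ and $m>\rho\theta$ are then merely the boundary conventions of the construction: the first packet is dropped on node $1$, and no packet is dropped once all $\rho\theta$ slots are exhausted, which is recorded by the sentinel value $0$.

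First I would settle the base case $m=1$. By Definition~\ref{Flower with multi ring} the first packet of the first cycle is placed on node $1$, so $s_1=1$, matching the first line of \eqref{Node sequence and jump}. I would adopt the convention $f_{ex}(0)=0$ so that the $i=0$ term appearing in the written sums contributes nothing; with this, $D(1)=1$ is consistent with $s_1=1$.

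The heart of the argument is the one-step recursion. Passing from the drop of packet $m-1$ to the drop of packet $m$ advances the current node by exactly one position along the ring, plus whatever jump is inserted at step $m-1$. By Definition~\ref{Jumps}, exactly one of two jumps applies at step $m-1$: an internal jump $f_{in}(m-1)$ when $m-1$ lies strictly inside a cycle (equivalently $\theta\nmid(m-1)$), or an external jump $f_{ex}\!\left(\tfrac{m-1}{\theta}\right)$ when $m-1$ is a cycle boundary (equivalently $\theta\mid(m-1)$). I would then verify that in either case
\begin{equation*}
D(m)-D(m-1)=1+\bigl[\text{jump inserted at step }m-1\bigr],
\end{equation*}
which follows by isolating the single new term $i=m-1$ in the two sums defining $D$. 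Hence $s_m\equiv s_{m-1}+1+(\text{jump})\pmod n$, and the inductive hypothesis $s_{m-1}\equiv D(m-1)\pmod n$ propagates to $s_m\equiv D(m)\pmod n$. Translating the residue back to a node label through the $0\mapsto n$ convention yields precisely the two middle lines of \eqref{Node sequence and jump}, separating the case $\vartheta(m)\neq 0$ from $\vartheta(m)=0$.

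I expect the main obstacle to be the bookkeeping of jump indices rather than any deep difficulty: one must check that the jump charged to the transition into $s_m$ is the one indexed by $m-1$ (so that the upper limit of each sum is $m-1$), that exactly one of the two sums acquires a new term at each step according to the divisibility of $m-1$ by $\theta$, and that the domains of $f_{in}$ and $f_{ex}$ (which respectively exclude and include only the multiples $\theta,2\theta,\dots,\rho\theta$) align with these cycle boundaries. The remaining care is purely notational: absorbing the boundary term at $i=0$ via the convention $f_{ex}(0)=0$, and mapping the residue class $0$ to node $n$ so that the reduction modulo $n$ agrees with the labelling $1,\dots,n$.
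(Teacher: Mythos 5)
Your proposal is correct and follows essentially the same route as the paper: both arguments split on whether $\theta$ divides $m-1$ to obtain the one-step recursion $s_m \equiv s_{m-1}+1+f_{in}(m-1)$ or $s_m \equiv s_{m-1}+1+f_{ex}\left(\frac{m-1}{\theta}\right) \pmod n$, and then unroll it to the closed form. The only difference is that you make the unrolling explicit via induction on the raw displacement $D(m)$ (and patch the $i=0$ boundary term with $f_{ex}(0)=0$), whereas the paper simply states that ``solving the recursive equation'' yields the result.
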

\begin{proof}
Suppose the node sequence of the Flower code $\mathscr{F}_{\mathscr{C}}: (n, \theta, \alpha, \rho)$ is $\left\langle s_m\right\rangle _{m=1}^{\rho\theta}$, with an internal jump function 
$f_{in}(m): \left\{1,2,...,\rho\theta\right\}\backslash\{\theta, 2\theta,\ldots,\rho\theta\}\rightarrow\N\cup \{0\}$
and an external jump function 
$f_{ex}(m): \left\{1, 2,...,\rho\right\}\rightarrow\N\cup \{0\}$.
By the Definition \ref{eq:example_left_right1} of node sequence $\left\langle s_m\right\rangle _{m=1}^{\rho\theta}$ of Flower code $\mathscr{F}_{\mathscr{C}}: (n, \theta, \alpha, \rho)$, we have $s_m\left(\in\Omega_n\right)$ is the node on which packet is dropped and WLOG for $m=1$ one can take $s_1=1$. If $p$ is the index of term $s_p$ in $\left\langle s_m\right\rangle_{m=1}^{\rho\theta}$ then for $p \neq 1$ and $p\leq\rho\theta$, following cases are raised.
\begin{enumerate}
	\item If $\theta\mid p$ then $\theta^{th}$ packet is placed on $s_p$ for a cycle and first packet for the next cycle is placed on $s_{p+1}$. Hence there is external jump between packets dropped on node $s_p$ and node $s_{p+1}$. 
	Clearly $\frac{p}{\theta}$ gives the index number of jump completed at node $s_p$.
	\item If $\theta\nmid p$ then both packets dropped on nodes indexed $s_p$ and $s_{p+1}$, are from same cycle so there is internal jump between the both nodes. 
\end{enumerate}
Note that both cases can not fall on same node $s_m$ in node sequence $\left\langle s_m\right\rangle _{m=1}^{\rho\theta}$. Following case $1$ for $p=m-1$,  
\begin{equation}
s_{m} = s_{m-1} + f_{ex}\left(\frac{m-1}{\theta}\right)+1,
\label{recursion on external}
\end{equation}
where $\theta\mid(m-1)$ and $m\in\{1,2,\ldots\rho\theta\}$.
Again for case $2$, 
\begin{equation}
s_{m}=s_{m-1}+f_{in}\left(m-1\right)+1,
\label{recursion on internal}
\end{equation}
where $\theta\nmid(m-1)$ and $m\in\{1,2,\ldots\rho\theta\}$.

Hence one can have the following recursive equation by the above equations (\ref{recursion on external}) and (\ref{recursion on internal}) with boundary conditions $s_1=1$ and $s_i=0\ \forall i(\in\N)>\rho\theta$ on node sequence $\left\langle s_m\right\rangle_{m=1}^{\rho\theta}$.
\begin{equation}
 s_{m} = \left\{ \,
\begin{array}{ll}
1 & :\mbox{ if } m=1; \\
s_{t} + f_{ex}(t_{ex})+1 & :\mbox{ if } \theta\mid t, 1<m\leq \rho\theta; \\
s_{t}+f_{in}(t)+1 & :\mbox{ if } \theta\nmid t, 1<m\leq \rho\theta; \\
0 & :\mbox{ if } m>\rho\theta,
\end{array}
\right.\label{recursive equastion}
\end{equation}
 where $t_{ex} = \frac{m-1}{\theta}\mbox{ and }t = (m-1)$. Solving the recursive equation (\ref{recursive equastion}), one can get relation (\ref{Node sequence and jump}).
\end{proof}

\par Since each FR code can be represented by node sequence, dropping sequence and incidence matrix so each Flower code can also be represented by node sequence, dropping sequence and incidence matrix. The following lemmas establish the relation among those collectively. The lemmas are as follows.

\begin{lemma} Consider a dropping sequence $\left\langle d(m)\right\rangle_{m=1}^l$ for a Flower code $\mathscr{F}_{\mathscr{C}}: (n, \theta, \alpha, \rho)$. If $d(t)=1$ for any $t$ $(1\leq t\leq l)$ then binary node-packet distribution incidence matrix $M_{n \times \theta}$ = $[a_{ij}]_{n \times \theta}$ is given by $a_{ij}=1$, where
\begin{equation}
\begin{split}
& j = \left\{ \,
\begin{array}{ll}
\left[wt\left\langle d(m)\right\rangle_{m=1}^t\right](mod\ \theta) & :\mbox{ if } \theta\nmid\left[wt\left\langle d(m)\right\rangle_{m=1}^t\right]; \\
\theta & :\mbox{ if } \theta\mid\left[wt\left\langle d(m)\right\rangle_{m=1}^t\right].
\end{array}
\right. \\
& and \\
& i = \left\{ \,
\begin{array}{ll}
t\ (mod\ n) & :\mbox{ if } n\nmid t; \\
n & :\mbox{ if } n\mid t;
\end{array}
\right.
\end{split}
\end{equation}
\label{dropping sequence and matrix relation}
\end{lemma}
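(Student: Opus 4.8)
The plan is to read off the two indices $i$ and $j$ separately, and then invoke Definition \ref{incidence matrix} to place a $1$ at position $(i,j)$. The key point is that the position index and the packet index are governed by two independent counting mechanisms already recorded for dropping sequences in Definition \ref{dropping sequence}: the position $t$ determines \emph{which node} we are standing on (the $n$ nodes being traversed cyclically along the ring), while the prefix weight $\mathrm{wt}\langle d(m)\rangle_{m=1}^{t}$ counts \emph{how many packets} have been dropped so far and hence names the packet being dropped at $t$.

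First I would fix the node index. Since the $n$ nodes sit on a single ring and the dropping sequence steps through the positions $m=1,2,\ldots,l$ one node at a time, position $t$ lies on node $U_i$ with $i\equiv t\pmod n$, taking $i=n$ when $n\mid t$. This is exactly the first property listed for $\langle d(m)\rangle_{m=1}^{l}$ after Definition \ref{dropping sequence}, and it holds regardless of the value of $d(t)$.

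Next I would fix the packet index, now using the hypothesis $d(t)=1$. Because the $\theta$ packets are dropped in the cyclic order $P_1,P_2,\ldots,P_\theta,P_1,\ldots$ across all $\rho$ cycles, the $w$-th packet actually dropped is $P_j$ with $j\equiv w\pmod\theta$ (and $j=\theta$ when $\theta\mid w$). The number of packets dropped in positions $1$ through $t$ equals the number of ones in the prefix, namely $w=\mathrm{wt}\langle d(m)\rangle_{m=1}^{t}$, and since $d(t)=1$ this last drop occurs precisely at position $t$. Thus the packet dropped at $t$ is $P_j$ with $j$ as in the statement; this reproduces the fourth property of the dropping sequence (the $\eta(r)$ rule with $r=t$), equivalently the node-sequence rule \eqref{eq:example_left_right1}.

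Combining the two steps, at position $t$ packet $P_j$ is dropped on node $U_i$, so by Definition \ref{incidence matrix} we have $a_{ij}=1$; the matrix is binary precisely because here each packet lands at most once on a given node. I expect no real obstacle: both index formulas are restatements of properties already established for dropping sequences, so the only care needed is in matching the two modular conventions — node $n$ versus $n\mid t$, and packet $\theta$ versus $\theta\mid w$ — with their boundary cases, which is routine bookkeeping.
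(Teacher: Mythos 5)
Your proposal is correct and follows essentially the same route as the paper's own (much terser) proof: the node index is read off from the position $t$ modulo $n$ via the cyclic traversal of the ring, and the packet index from the prefix weight $wt\left\langle d(m)\right\rangle_{m=1}^{t}$ modulo $\theta$, both being restatements of the properties listed after Definition \ref{dropping sequence}. Your version simply makes explicit the two counting mechanisms and the boundary conventions that the paper leaves implicit.
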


\begin{proof} Let $\left\langle d(m)\right\rangle_{m=1}^l$ be a dropping sequence for an FR code $\mathscr{C}: (n, \theta, \alpha, \rho)$ with $n$ nodes and $\theta$ packets. 
For $d(m)=1$ ($d(t)\in\left\langle d(m)\right\rangle_{m=1}^l$), index of the packet associated with the $d(t)$ is mapped to weight of subsequence $\left\langle d(m)\right\rangle_{m=1}^l$. Hence the lemma.

\end{proof}
\begin{remark} If Flower code $\mathscr{F}_{\mathscr{C}}: (n, \theta, \alpha, \rho)$ has a non-binary incidence matrix then one can calculate the incidence matrix by the Algorithm \ref{algo for incidence matrix} using dropping sequence.

\line(-1,0){20}\line(1,0){240}
\begin{algorithm}
Algorithm to compute node-packet distribution incidence matrix $M_{n \times \theta}$ for Flower code $\mathscr{F}_{\mathscr{C}}: (n, \theta, \alpha, \rho)$.
\label{algo for incidence matrix}
\end{algorithm}
\line(-1,0){20}\line(1,0){240}
\newline\textbf{REQUIRE} Dropping sequence $\left\langle d(m)\right\rangle_{m=1}^l$ of Flower code $\mathscr{F}_{\mathscr{C}}: (n, \theta, \alpha, \rho)$.
\newline\textbf{ENSURE} Node-packet distribution incidence matrix $M_{n \times \theta}$.
{\footnotesize\begin{enumerate}
\item Initially $\forall i,j$ set $a_{ij}=0$ and $t=1\ where\ a_{ij}$ is the element of $M_{n\times\theta},1\leq i\leq n,\ 1\leq j\leq\theta\ and\ 1\leq t\leq l.$ 
\item If $d(t)=0$ then jump to step $3$ and if $d(t)=1$ then calculate $i,j$ and set $a_{ij}=a_{ij}+1$ and go to step $3$, $where$
\begin{equation*}
\begin{split}
& i = \left\{ \,
\begin{array}{ll}
t\ (mod\ n) & :\mbox{ if } n\nmid t; \\
n & :\mbox{ if } n\mid t;
\end{array}
\right. and \\
& j = \left\{ \,
\begin{array}{ll}
\left[wt\left\langle d(m)\right\rangle_{m=1}^t\right](mod\ \theta) & :\mbox{ if } \theta\nmid\left[wt\left\langle d(m)\right\rangle_{m=1}^t\right]; \\
\theta & :\mbox{ if } \theta\mid\left[wt\left\langle d(m)\right\rangle_{m=1}^t\right].
\end{array}
\right.
\end{split}
\end{equation*}
\item If $t<l$ then set $t=t+1$ and go to step $2$ otherwise stop.
\end{enumerate}}
\line(-1,0){20}\line(1,0){240}

\par It is clear that Algorithm \ref{algo for incidence matrix} is the generalization of Lemma \ref{dropping sequence and matrix relation}.
\end{remark}


\begin{lemma} Consider a Flower code $\mathscr{F}_{\mathscr{C}}: (n, \theta, \alpha, \rho)$ with node sequence $\left\langle s_i\right\rangle_{i=1}^{\rho\theta}$.  Its dropping sequence is given by $\left\langle d(m)\right\rangle_{m=1}^l\ s.t.\ \forall\ i,1\leq i\leq\rho\theta$ 
\begin{equation*}
 d(m) = \left\{ \,
\begin{array}{ll}
1 & :\mbox{ if } m=\sum\limits_{j=1}^i\left(s_j-s_{j-1}\right)(mod\ n); \\
0 & :\mbox{ if } m\neq\sum\limits_{j=1}^i\left(s_j-s_{j-1}\right)(mod\ n),
\end{array}
\right.
\end{equation*}
where $s_o\ =\ 0$. 
\end{lemma}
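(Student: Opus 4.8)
The plan is to track the positions in the dropping sequence at which the successive $1$'s occur and to show that the increments between consecutive $1$'s are exactly the forward cyclic distances $(s_i-s_{i-1})\ (mod\ n)$ appearing in the claimed formula. Write $m_i$ for the position in $\left\langle d(m)\right\rangle_{m=1}^l$ carrying the $i$-th one, so that $d(m)=1$ precisely when $m\in\{m_1,m_2,\ldots,m_{\rho\theta}\}$. By Property $1$ of the dropping sequence, the node associated with position $m$ is $m\ (mod\ n)$ (read as $n$ when $n\mid m$); since the $i$-th dropped packet lands on node $s_i$ by the definition of the node sequence (Definition \ref{node sequence}), this forces $m_i\equiv s_i\ (mod\ n)$ for every $i$, with the boundary convention $m_0=0=s_0$.

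First I would argue that $m_i-m_{i-1}$ equals the least positive residue of $s_i-s_{i-1}$ modulo $n$ (taking the value $n$ rather than $0$ when $s_i=s_{i-1}$). The positions are strictly increasing, so $m_i-m_{i-1}\ge 1$; and because the entries strictly between $m_{i-1}$ and $m_i$ are all $0$ (no packet is dropped there), $m_i$ is the smallest index exceeding $m_{i-1}$ whose node is $s_i$. Combined with the congruence $m_i-m_{i-1}\equiv s_i-s_{i-1}\ (mod\ n)$, this pins the gap down to the forward cyclic distance from node $s_{i-1}$ to node $s_i$. Telescoping these increments from $m_0=0$ then yields $m_i=\sum_{j=1}^{i}\bigl[(s_j-s_{j-1})\ (mod\ n)\bigr]$, which is exactly the position asserted in the statement; hence $d(m)=1$ iff $m$ equals one of these partial sums, and $d(m)=0$ otherwise, as claimed.

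The hard part will be the step pinning the gap down to the least residue rather than the least residue plus some multiple of $n$. This is where the normalization of Properties $2$ and $3$ is essential: after puncturing all blocks of $n$ consecutive $0$'s one has fewer than $n$ consecutive $0$'s anywhere, so $m_i-m_{i-1}\le n$, and a quantity that is simultaneously $\le n$, $\ge 1$, and $\equiv s_i-s_{i-1}\ (mod\ n)$ is unique. I would therefore state explicitly that the lemma refers to this reduced dropping sequence, since for a non-reduced sequence the same node data is consistent with gaps inflated by multiples of $n$, and the formula would then only recover the positions of the $1$'s up to such multiples.

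A secondary point worth making precise is the reading of the summation itself: the reduction modulo $n$ must be applied to each term $s_j-s_{j-1}$ \emph{before} summing, for otherwise the sum telescopes to $s_i$ and the formula degenerates. Likewise the residue $0$ must be promoted to $n$, so that two successive drops on the same node are correctly separated by one full traversal of the ring. With these conventions fixed, the accumulation of forward cyclic distances reproduces the sequence of drop positions exactly, and the proof reduces to the elementary congruence bookkeeping sketched above.
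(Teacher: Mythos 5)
Your proposal is correct and follows essentially the same route as the paper's own (much terser) proof: both identify the number of zeros between consecutive $1$'s with the forward cyclic distance $[s_{i+1}-s_i]\ (mod\ n)$ and then read off the positions of the $1$'s as the resulting partial sums. Your additional care — requiring the reduced dropping sequence with fewer than $n$ consecutive zeros, applying the reduction modulo $n$ termwise before summing, and promoting the residue $0$ to $n$ — makes explicit conventions that the paper's two-sentence argument silently assumes and without which its stated formula would be wrong.
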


\begin{proof} For a Flower code $\mathscr{F}_{\mathscr{C}}: (n, \theta, \alpha, \rho)$, consider a dropping sequence $\left\langle d(m)\right\rangle_{m=1}^l$. By Definition \ref{node sequence}, two packets with consecutive indexes are associated with $s_i$ and $s_{i+1}$ for some $i$. Using Definition \ref{dropping sequence}, one can find that $[s_{i+1}-s_i](mod\ n)$ number of zeros exist between two consecutive $1$'$s$. In particular, the $1$'$s$ are associative with the two packets. It proves the lemma.
\end{proof}

\begin{lemma} Consider a Flower code $\mathscr{F}_{\mathscr{C}}: (n, \theta, \alpha, \rho)$ with dropping sequence $\left\langle d(m)\right\rangle_{m=1}^l$. Its node sequence is given by $\left\langle s_i\right\rangle_{i=1}^{\rho\theta}\ s.t. \forall\ t\ (1\leq t\leq l)\ with\ d(t)=1$,
\begin{equation*}
 i = wt\left\langle d(m)\right\rangle_{m=1}^t 
\end{equation*}
 and 
\begin{equation*}
 s_i = \left\{ \,
\begin{array}{ll}
t\ (mod\ n) & :\mbox{ if } n\nmid t; \\
n & :\mbox{ if } n\mid t.
\end{array}
\right. \\
\end{equation*}
\end{lemma}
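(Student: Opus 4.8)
The plan is to read the node sequence directly off the dropping sequence by inverting the bookkeeping that defines both objects, treating the ordinal count of ones and the positional reduction modulo $n$ as two independent indices. First I would recall from Definition~\ref{dropping sequence} (property~1) that a symbol $d(t)=1$ records a packet being dropped on node $U_t$, where the hosting node is $t\,(mod\ n)$ when $n\nmid t$ and $n$ when $n\mid t$; this is already the value of $s_i$ asserted in the statement, so all that remains is to pin down the correct node-sequence index $i$.

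Next I would invoke Definition~\ref{node sequence}, by which $\left\langle s_i\right\rangle_{i=1}^{\rho\theta}$ lists the nodes in the order packets are actually dropped, so $s_i$ is the node used at the $i$-th dropping event. Since in $\left\langle d(m)\right\rangle_{m=1}^l$ each $1$ marks a genuine drop and each $0$ a skipped (null) position, the $i$-th drop occurs exactly where the running count of ones first reaches $i$. Hence at any position $t$ with $d(t)=1$, the number of drops up to and including $t$ equals $wt\left\langle d(m)\right\rangle_{m=1}^t$, giving $i=wt\left\langle d(m)\right\rangle_{m=1}^t$. Combining this with the node identification above yields the stated formula for $s_i$.

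The hard part, insofar as there is one, is purely organisational: one must keep the two indexings cleanly separated --- the weight $wt\left\langle d(m)\right\rangle_{m=1}^t$ counts \emph{how many} packets have been placed (the ordinal position in the node sequence), while the reduction $t\,(mod\ n)$ identifies \emph{which} physical node on the ring receives the current packet. Once the correspondence between ones and drops is made explicit via property~1 of the dropping sequence, these two counts do not interact, and the identity follows immediately with no further computation.
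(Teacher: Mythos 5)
Your proposal is correct and follows essentially the same route as the paper: the paper's (very terse) proof likewise rests on the single observation that, when $d(t)=1$, the weight of the prefix $\left\langle d(m)\right\rangle_{m=1}^{t}$ gives the ordinal position $i$ of the drop while property~1 of the dropping sequence identifies the hosting node as $t \pmod{n}$ (or $n$ when $n\mid t$). You merely spell out in more detail the two-index bookkeeping that the paper leaves implicit.
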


\begin{proof}
Using Definition \ref{node sequence} and Definition \ref{dropping sequence}, one can easily prove the lemma by observing that the weight of subsequence $\left\langle d(m)\right\rangle_{m=1}^t$ is associated with the packet for $d(t)=1$. 
\end{proof}

These sequence construction approaches suggest that one can construct an FR code from any arbitrary finite binary sequence by treating it as a characteristic sequence of dropping packets on the nodes. This can be done in different ways by choosing appropriate sequences of symbols. Hence, an arbitrary finite binary sequence $\left\langle \chi(m)\right\rangle_{m=1}^{\ell}\ (\chi(m)\in\Z_2)$ with length $\ell(\in\N)$ can be defined as a characteristic sequence for an FR code. In the sequence, value $\chi(m)=1$ represents to drop (value $\chi(m)=0$ not to drop) a packet on certain node.
Hence a more general Flower code $\mathscr{F}_{\mathscr{C}}: (n, \theta, \alpha , \rho)$ can be defined as follows.
\begin{definition} (Flower code): For $n$ nodes, $\theta$ packets and an arbitrary binary sequence $\left\langle \chi(m)\right\rangle_{m=1}^{\ell}\ (\chi(m)\in\Z_2)$ of length $\ell(\in\N)$ (treated as characteristic sequence indexed by $m$), Flower code $\mathscr{F}_{\mathscr{C}}: (n, \theta, \alpha , \rho)$ can be defined as a system in which packet indexed by $m(mod\ \theta)$ is dropped on node indexed by $m(mod\ n)$ iff $\chi(m)=1$, where $P_0$ and $U_0$ are mapped to packet $P_{\theta}$ and node $U_n$ respectively. Clearly $\alpha_i=\sum_{m=0}^{\left\lfloor \frac{\ell-i}{n}\right\rfloor}\chi(i+nm)$ and $\rho_j=\sum_{m=0}^{\left\lfloor \frac{\ell-j}{\theta}\right\rfloor}\chi(j+\theta m)$.
\label{Flower}
\end{definition}
\begin{example}
For given $n=4$, $\theta=5$ and a binary characteristic sequence $\left\langle \chi(m)\right\rangle_{m=1}^{16}$ = $\left\langle 1,1,0,1,0,0,1,1,1,1,1,1,\right.$ $\left. 0,1,0,1\right\rangle$, one can find Node packet distribution of Flower code $\mathscr{F}_{\mathscr{C}}:\ (4, 5, 3, 2)$ as shown in Table \ref{3}.

\begin{table}[ht]
\caption{Node-Packet Distribution for Flower code $\mathscr{F}_{\mathscr{C}}:\ (4, 5, 4, 3)$.}
\centering 
\begin{tabular}{|c||c|c|}
\hline
\textbf{Nodes $U_i$} &\textbf{Packets distribution} & \textbf{Repair degree $d_i$} \\[0.5ex] 
\hline\hline
$U_1$& $P_1,\ P_4$            & 2\\ \hline
$U_2$& $P_2,\ P_4\ P_5$       & 3\\ \hline 
$U_3$& $P_1,\ P_2,\ P_5$      & 3\\ \hline 
$U_4$& $P_1,\ P_2,\ P_3,\ P_4$& 4\\ \hline 
\end{tabular}
\label{3}
\end{table}
\end{example}

\begin{remark}
Flower codes $\mathscr{F}_{\mathscr{C}}: (n, \theta, \alpha , \rho)$ (Definition \ref{Flower}) with different packet replication factor $\rho_j$ $(j\in\Omega_{\theta})$ can also be constructed using any binary sequence $\left\langle \chi(m)\right\rangle_{m=1}^{\ell}$.
\end{remark}
\section{Conclusion}
This paper introduces a novel class of FR codes based on sequences. It will be an interesting future task to find a condition when Flower codes are optimal (in sense of the capacity). Our work opens a nice connection of FR codes with well known area of sequences. It would be an interesting future task to study FR codes by using well known class of sequences.  
\bibliographystyle{IEEEtran}
\bibliography{cloud}

\begin{thebibliography}{10}
\providecommand{\url}[1]{#1}
\csname url@samestyle\endcsname
\providecommand{\newblock}{\relax}
\providecommand{\bibinfo}[2]{#2}
\providecommand{\BIBentrySTDinterwordspacing}{\spaceskip=0pt\relax}
\providecommand{\BIBentryALTinterwordstretchfactor}{4}
\providecommand{\BIBentryALTinterwordspacing}{\spaceskip=\fontdimen2\font plus
\BIBentryALTinterwordstretchfactor\fontdimen3\font minus
  \fontdimen4\font\relax}
\providecommand{\BIBforeignlanguage}[2]{{%
\expandafter\ifx\csname l@#1\endcsname\relax
\typeout{** WARNING: IEEEtran.bst: No hyphenation pattern has been}%
\typeout{** loaded for the language `#1'. Using the pattern for}%
\typeout{** the default language instead.}%
\else
\language=\csname l@#1\endcsname
\fi
#2}}
\providecommand{\BIBdecl}{\relax}
\BIBdecl

\bibitem{rr10}
S.~El~Rouayheb and K.~Ramchandran, ``Fractional repetition codes for repair in
  distributed storage systems,'' in \emph{Communication, Control, and Computing
  (Allerton), 2010 48th Annual Allerton Conference on}, Oct. 2010, pp. 1510
  --1517.

\bibitem{survey}
A.~Dimakis, K.~Ramchandran, Y.~Wu, and C.~Suh, ``A survey on network codes for
  distributed storage,'' \emph{Proceedings of the IEEE}, vol.~99, no.~3, pp.
  476 --489, march 2011.

\bibitem{6062413}
N.~Shah, K.~V. Rashmi, P.~Kumar, and K.~Ramchandran, ``Distributed storage
  codes with repair-by-transfer and nonachievability of interior points on the
  storage-bandwidth tradeoff,'' \emph{Information Theory, IEEE Transactions
  on}, vol.~58, no.~3, pp. 1837--1852, March 2012.

\bibitem{RSKR10}
K.~Rashmi, N.~Shah, P.~Kumar, and K.~Ramchandran, ``Explicit and optimal
  exact-regenerating codes for the minimum-bandwidth point in distributed
  storage,'' in \emph{Information Theory Proceedings (ISIT), 2010 IEEE
  International Symposium on}, june 2010, pp. 1938 --1942.

\bibitem{7066224}
B.~Zhu, K.~Shum, and H.~Li, ``Heterogeneity-aware codes with uncoded repair for
  distributed storage systems,'' \emph{Communications Letters, IEEE}, vol.~19,
  no.~6, pp. 901--904, June 2015.

\bibitem{DBLP:journals/corr/abs-1102-3493}
J.~C. Koo and J.~T.~G. III, ``Scalable constructions of fractional repetition
  codes in distributed storage systems,'' \emph{CoRR}, vol. abs/1102.3493,
  2011.

\bibitem{Wangwang12}
Y.~Wang and X.~Wang, ``A fast repair code based on regular graphs for
  distributed storage systems,'' in \emph{Parallel and Distributed Processing
  Symposium Workshops PhD Forum (IPDPSW), 2012 IEEE 26th International}, may
  2012, pp. 2486 --2489.

\bibitem{DBLP:journals/corr/SilbersteinE14}
\BIBentryALTinterwordspacing
N.~Silberstein and T.~Etzion, ``Optimal fractional repetition codes,''
  \emph{CoRR}, vol. abs/1401.4734, 2014. [Online]. Available:
  \url{http://arxiv.org/abs/1401.4734}
\BIBentrySTDinterwordspacing

\bibitem{DBLP:journals/corr/SilbersteinE15}
\BIBentryALTinterwordspacing
------, ``Optimal fractional repetition codes and fractional repetition batch
  codes,'' \emph{CoRR}, vol. abs/1501.05177, 2015. [Online]. Available:
  \url{http://arxiv.org/abs/1501.05177}
\BIBentrySTDinterwordspacing

\bibitem{7118709}
------, ``Optimal fractional repetition codes based on graphs and designs,''
  \emph{Information Theory, IEEE Transactions on}, vol.~PP, no.~99, pp. 1--1,
  2015.

\bibitem{6912604}
B.~Zhu, K.~Shum, H.~Li, and S.-Y. Li, ``On low repair complexity storage codes
  via group divisible designs,'' in \emph{Computers and Communication (ISCC),
  2014 IEEE Symposium on}, June 2014, pp. 1--5.

\bibitem{DBLP:journals/corr/OlmezR14}
\BIBentryALTinterwordspacing
O.~Olmez and A.~Ramamoorthy, ``Fractional repetition codes with flexible repair
  from combinatorial designs,'' \emph{CoRR}, vol. abs/1408.5780, 2014.
  [Online]. Available: \url{http://arxiv.org/abs/1408.5780}
\BIBentrySTDinterwordspacing

\bibitem{DBLP:journals/corr/abs-1210-2110}
------, ``Repairable replication-based storage systems using resolvable
  designs,'' \emph{CoRR}, vol. abs/1210.2110, 2012.

\bibitem{6763122}
B.~Zhu, K.~Shum, H.~Li, and H.~Hou, ``General fractional repetition codes for
  distributed storage systems,'' \emph{Communications Letters, IEEE}, vol.~18,
  no.~4, pp. 660--663, April 2014.

\bibitem{7004501}
B.~Zhu, H.~Li, and K.~Shum, ``Repair efficient storage codes via combinatorial
  configurations,'' in \emph{Big Data (Big Data), 2014 IEEE International
  Conference on}, Oct 2014, pp. 80--81.

\bibitem{6810361}
O.~Olmez and A.~Ramamoorthy, ``Constructions of fractional repetition codes
  from combinatorial designs,'' in \emph{Signals, Systems and Computers, 2013
  Asilomar Conference on}, Nov 2013, pp. 647--651.

\bibitem{6033980}
S.~Pawar, N.~Noorshams, S.~El~Rouayheb, and K.~Ramchandran, ``Dress codes for
  the storage cloud: Simple randomized constructions,'' in \emph{Information
  Theory Proceedings (ISIT), 2011 IEEE International Symposium on}, July 2011,
  pp. 2338--2342.

\bibitem{6570830}
O.~Olmez and A.~Ramamoorthy, ``Replication based storage systems with local
  repair,'' in \emph{Network Coding (NetCod), 2013 International Symposium on},
  June 2013, pp. 1--6.

\bibitem{DBLP:journals/corr/abs-1208-2787}
\BIBentryALTinterwordspacing
Y.~Hu, P.~P.~C. Lee, and K.~W. Shum, ``Analysis and construction of functional
  regenerating codes with uncoded repair for distributed storage systems,''
  \emph{CoRR}, vol. abs/1208.2787, 2012. [Online]. Available:
  \url{http://arxiv.org/abs/1208.2787}
\BIBentrySTDinterwordspacing

\bibitem{DBLP:journals/corr/abs-1201-3547}
T.~Ernvall, ``The existence of fractional repetition codes,'' \emph{CoRR}, vol.
  abs/1201.3547, 2012.

\bibitem{DBLP:journals/corr/abs-1302-3681}
\BIBentryALTinterwordspacing
M.~K. Gupta, A.~Agrawal, and D.~Yadav, ``On weak dress codes for cloud
  storage,'' \emph{CoRR}, vol. abs/1302.3681, 2013. [Online]. Available:
  \url{http://arxiv.org/abs/1302.3681}
\BIBentrySTDinterwordspacing

\bibitem{DBLP:journals/corr/abs-1303-6801}
S.~Anil, M.~K. Gupta, and T.~A. Gulliver, ``Enumerating some fractional
  repetition codes,'' \emph{CoRR}, vol. abs/1303.6801, 2013.

\bibitem{6804948}
Q.~Yu, C.~W. Sung, and T.~Chan, ``Irregular fractional repetition code
  optimization for heterogeneous cloud storage,'' \emph{Selected Areas in
  Communications, IEEE Journal on}, vol.~32, no.~5, pp. 1048--1060, May 2014.

\bibitem{6811237}
B.~Zhu, H.~Li, H.~Hou, and K.~Shum, ``Replication-based distributed storage
  systems with variable repetition degrees,'' in \emph{Communications (NCC),
  2014 Twentieth National Conference on}, Feb 2014, pp. 1--5.

\bibitem{DBLP:journals/corr/Silberstein14}
\BIBentryALTinterwordspacing
N.~Silberstein, ``Fractional repetition batch codes,'' \emph{CoRR}, vol.
  abs/1405.6157, 2014. [Online]. Available:
  \url{http://arxiv.org/abs/1405.6157}
\BIBentrySTDinterwordspacing

\bibitem{DBLP:journals/corr/abs-1305-4580}
\BIBentryALTinterwordspacing
K.~G. Benerjee, M.~K. Gupta, and N.~Agrawal, ``Reconstruction and repair degree
  of fractional repetition codes,'' \emph{CoRR}, vol. abs/1305.4580, 2013.
  [Online]. Available: \url{http://arxiv.org/abs/1305.4580}
\BIBentrySTDinterwordspacing

\end{thebibliography}
\end{document}